\documentclass[submission,copyright,creativecommons]{eptcs}
\usepackage{breakurl}             
\usepackage{mathbbol}
\usepackage{newproof}
\usepackage{amsmath}
\usepackage{gastex}

 %
\newcommand{\problemx}[3]{
\vspace{3mm}
\par\noindent{\bf#1}\par\nobreak\vskip.2\baselineskip
\begingroup\clubpenalty10000\widowpenalty10000
\setbox0\hbox{\bf INPUT:\ \ }\setbox1\hbox{\bf QUESTION:\ \ }
\dimen0=\wd0\ifnum\wd1>\dimen0\dimen0=\wd1\fi
\vskip-\parskip\noindent
\hbox to\dimen0{\box0\hfil}\hangindent\dimen0\hangafter1\ignorespaces#2\par
\vskip-\parskip\noindent
\hbox to\dimen0{\box1\hfil}\hangindent\dimen0\hangafter1\ignorespaces#3\par
\endgroup\vspace{3mm}
}


\newcommand{\N}{\mathbb{N}}
\newcommand{\RP}{\mathbb{R}_{\geq 0}}
\newcommand{\QP}{\mathbb{Q}_{\geq 0}}

\newcommand{\finally}{\mathsf{F}}

\newcommand{\locs}{\mathcal{L}}
\newcommand{\edges}{E}

\newcommand{\clocks}{\mathcal{X}}
\newcommand{\clock}{x}

\newcommand{\loc}{\mathit{l}}

\newcommand{\TW}{T\Sigma^+}

\usepackage{todonotes}
\newcommand{\true}{\mathtt{true}}
\newcommand{\false}{\mathtt{false}}

\newcommand{\U}{\mathsf{U}}
\newcommand{\X}{\mathsf{X}}

\newcommand{\sep}{\textrm{ }|\textrm{ }} 

\renewcommand{\frac}{\mathsf{frac}}
\renewcommand{\int}{\mathsf{int}}

\newcommand*\ie{\textit{i.e.}}
	\newcommand*\eg{\textit{eg.}}

\newcommand{\glob}{\mathsf{G}}

\newcommand{\A}{\mathcal{A}}

\newcommand{\cm}{\mathcal{C}}

\newcommand{\pv}{\pi} 
\newcommand{\params}{\mathcal{P}}

\makeatletter
\newcommand*{\defeq}{\mathrel{\rlap{%
                     \raisebox{0.3ex}{$\m@th\cdot$}}%
                     \raisebox{-0.3ex}{$\m@th\cdot$}}%
                     =}
\makeatother

\makeatletter
\newcommand*{\ndefeq}{\mathrel{\rlap{%
                     \raisebox{0.3ex}{$\m@th\cdot$}}%
                     \raisebox{-0.3ex}{$\m@th\cdot$}%
                     \rlap{%
                     \raisebox{0.3ex}{$\m@th\cdot$}}
                     \raisebox{-0.3ex}{$\m@th\cdot$}}
                     =}
\makeatother

\newtheorem{theorem}{Theorem}[section]
\newtheorem{example}{Example}[section]
\newtheorem{lemma}[theorem]{Lemma}
\newtheorem{corollary}[theorem]{Corollary}

\title{MTL-Model Checking of One-Clock Parametric Timed Automata is Undecidable}
\author{Karin Quaas\thanks{The author is supported by Deutsche Forschungsgemeinschaft (DFG), project QU~316/1-1.}
\institute{Institut f\"ur Informatik\\ Universit\"at Leipzig \\D-04109 Leipzig, Germany}
}

\begin{document}
\maketitle

\begin{abstract}
	Parametric timed automata extend timed automata (Alur and Dill, 1991) 
	in that 
	they 
	allow the specification of \emph{parametric} bounds on the clock values.
	Since their introduction in 1993 
	by Alur, Henzinger, and Vardi,
	it is known that the emptiness problem for parametric timed automata with one clock is decidable, whereas it is undecidable if the automaton  uses three or more parametric clocks. The problem is open for parametric timed automata with two parametric clocks. 
	Metric temporal logic, MTL for short, is a widely used specification language for real-time systems. MTL-model checking of timed automata is decidable, no matter how many clocks are used in the timed automaton.  
	In this paper, we prove that MTL-model checking for parametric timed automata is undecidable, even if the automaton  uses only one clock and one parameter and is deterministic. 
\end{abstract}

\section{Introduction}
An important field of algorithmic verification is the analysis of real-time systems, \ie, systems whose behaviour depend on time-critical aspects. 
Since the early nineties, 
numerous formalisms have been investigated to 
express and verify real-time properties. 
Two prominent examples of such formalisms are \emph{timed automata} and \emph{metric temporal logic}. 
Timed automata~\cite{AD94} extend classical finite automata with a finite set of real-valued \emph{clocks} whose values grow with the passage of time. The edges of a timed automaton are labelled with \emph{clock constraints} that compare the value of a clock with some constant. An edge can only be taken if the current values of the clocks satisfy the clock constraint labelling the edge. 
The central property of timed automata is the decidability of the emptiness problem~\cite{AD94}. 

Metric temporal logic (MTL, for short) extends classical linear temporal logic 
by constraining the temporal 
modalities with intervals of the non-negative reals. For example, the formula $\finally_{[0,2]}\varphi$ means that $\varphi$ will hold within two time units from now. 
Introduced by Koymans in 1990~\cite{K90}, 
the satisfiability problem and the model checking problem for timed automata were assumed to be undecidable for a long time. 
However, more than 20 years later it was proved by Ouaknine and Worrell~\cite{DBLP:journals/lmcs/OuaknineW07} that both problems are decidable if MTL is interpreted in the pointwise semantics over \emph{finite} timed words. 
The decidability of the MTL-model checking problem for timed automata is independent of the number of clocks that the timed automaton uses. 

A major drawback of timed automata and MTL is that they only allow the specification of \emph{concrete} constraints on timing properties, \ie, one has to provide the concrete values of all time-related constraints that occur in the real-time system.  
However, it is often more realistic to provide \emph{symbolic} (or, \emph{parametric}) constraints, in particular, if the real-time system under construction is not known in full details in the early stages of design.
With the purpose to overcome the incapability of timed automata to express  parametric time constraints, 
\emph{parametric timed automata} were introduced~\cite{DBLP:conf/stoc/AlurHV93}.
Parametric timed automata are timed automata defined over a finite set of parameters, which can be used in clock constraints labelling the edges of the automaton. 
For an example, consider the parametric timed automaton shown in Fig.Ê\ref{figure_ex} on page 4. 
The clock $y$ is concretely constrained by a constant like in ordinary timed automata. In contrast to this, the clock $x$ is parametrically constrained by the parameter $p$. The value of $p$ is determined by a parameter valuation, \ie, a function mapping each parameter to a value in the non-negative reals. 

A crucial verification problem for parametric timed automata is the emptiness problem: given a parametric timed automaton $\A$, does there exist some parameter valuation such that $\A$ has an accepting run? However,
it turns out that this problem is undecidable already if $\A$ uses three or more parametric clocks~\cite{DBLP:conf/stoc/AlurHV93}. On the positive side, the problem is decidable if in $\A$ at most one clock is compared to parameters. So far nothing is known about the decidability status for parametric timed automata with two parametric clocks; the problem is closely related to some hard and open problems of logic and automata theory~\cite{DBLP:conf/stoc/AlurHV93}. 

In this paper, 
we concern ourselves with the MTL-model checking problem for parametric timed automata: given a parametric timed automaton $\A$ and a specification in form of an MTL formula $\varphi$, does there exist some parameter valuation such that all finite runs of $\A$ satisfy $\varphi$?
For parametric timed automata with three clocks, the undecidability of this problem follows from the undecidability of the emptiness problem. 
Here, we prove that the problem is undecidable even if $\A$ uses only one clock and one parameter and is deterministic. 
This negative result is in contrast to the decidability of the emptiness problem for one-clock parametric timed automata, and the decidability of MTL-model checking of timed automata. 
The result can 
be regarded as further step towards the precise decidability border for the reachability problem for parametric timed automata with two parametric clocks, which is open for more than 20 years.

\paragraph{\em Related work}
The reader might wonder why we consider model checking for \emph{parametric} timed automata and \emph{standard} MTL, \ie, a non-parametric extension of MTL.
It is well known that if we extend classical LTL with formulae of the form $\varphi_1\U_{=p}\varphi_2$, meaning  that $\varphi_2$ has to hold in exactly $p$ steps from now on for some parameter $p$, then
the satisfiability problem (``Given a formula $\varphi$, is there some parameter valuation such that $\varphi$ is satisfiable?'') is undecidable: LTL with  parameterized \emph{equality modalities} of the form $\U_{=p}$ can be used to encode halting computations of two-counter machines~\cite{DBLP:journals/tocl/AlurETP01}. Undecidablity of the satisfiability problem implies undecidability of the model checking problem for all systems that are capable to recognize the universal language over a given alphabet (as it is the case for, \eg, timed automata). In~\cite{DBLP:journals/tocl/AlurETP01} it is also noted that the undecidability proof for LTL with parameterized equality modalities 
can be adapted to prove the undecidability of the satisfiability problem for LTL extended with parameterized \emph{upper bound modalities} of the form $\U_{\leq p}$ and
\emph{lower bound modalities} of the form $\U_{>p}$ unless 
we restrict every parameter to occur in \emph{either} lower bound modalities \emph{or} upper bound modalities, but not in both.

The restriction on the parameters of a parametric timed automaton to occur either as a lower bound or as an upper bound also forms an important subclass of parametric timed automata, called \emph{lower bound/upper bound (L/U) automata}~\cite{DBLP:journals/jlp/HuneRSV02}. For this subclass the emptiness problem is decidable independent of the number of parametric clocks, and for both finite~\cite{DBLP:journals/jlp/HuneRSV02} and infinite runs~\cite{DBLP:journals/fmsd/BozzelliT09}. Model checking L/U automata with parametric extensions of MITL~\cite{DBLP:journals/jacm/AlurFH96} in the  \emph{interval-based} semantics is decidable~\cite{DBLP:journals/fmsd/BozzelliT09,DBLP:conf/lata/GiampaoloTN10}. 
Recall that constraints occurring at modalities of MITL formulae are not allowed to be of the form $=n$ (not even if the constraint is \emph{concrete}, \ie, $n\in\N$); in fact, the satisfiability and model checking problems for (non-parametric) MTL in the interval-based semantics are undecidable~\cite{Henzinger}.

A crucial aspect of our undecidability proof is the fact that MTL formulae can be used to encode computations of \emph{channel machines with insertion errors}~\cite{DBLP:conf/fossacs/OuaknineW06}: 
For every channel machine $\cm$, there is an MTL formula $\varphi_\cm$ that is satisfiable if, and only if, $\mathcal{C}$ has a halting computation that may contain insertion errors. 
This fact was used in~\cite{DBLP:conf/fossacs/OuaknineW06} to prove the lower complexity bound of the satisfiability problem for MTL over finite timed words.
In our proof, we use the parameterized timed automaton to \emph{exclude} insertion errors in the timed words encoding computations of $\cm$. 
We remark that the idea for this proof is similar to the proof of the  undecidability for the model checking problem for one-counter machines and Freeze LTL with one register (LTL$^\downarrow_1$, for short) ~\cite{DBLP:conf/fossacs/DemriLS08}:
In~\cite{DBLP:journals/tocl/DemriL09}, it is proved that LTL$^\downarrow_1$ formulae can be used to encode halting computations of \emph{counter automata with incrementing errors}. Like MTL, LTL$^\downarrow_1$ is not capable to exclude such errors.
In~\cite{DBLP:conf/fossacs/DemriLS08}, it is shown that this incapability can be repaired by combining the formula with a non-deterministic one-counter machine.
Let us, however, note that there are substantial technical differences between the formalisms MTL and parametric timed automata on the one side, and LTL$^\downarrow_1$ and one-counter machines on the other side.

\section{Parametric Timed Automata}
We use $\N$, $\QP$, and  $\RP$ to denote the non-negative integers, non-negative rationals, and the non-negative reals, respectively.
In this section, we fix a finite alphabet $\Sigma$,
a finite set $\params=\{p_1,\dots,p_m\}$ of \emph{parameters}, and a finite set $\clocks=\{x_1,\dots,x_n\}$ of \emph{clocks}. 

We define {\em  clock constraints} $\phi$ over $\clocks$ and $\params$ to be conjunctions of formulae of the form $\clock\sim c$, where $\clock\in \clocks$, $c\in\N\cup\params$, and $\sim\in\{<,\leq,=,\geq,>\}$. 
We use $\Phi(\clocks,\params)$ to denote the set of all clock constraints over $\clocks$ and $\params$.
A \emph{clock valuation} is a function from $\clocks$ to $\RP$.
For $\delta\in\RP$, we define $\nu+\delta$ to be $(\nu+\delta)(\clock)=\nu(\clock)+\delta$ for each $\clock\in \clocks$.
For $\lambda\subseteq \clocks$, we define $\nu[\lambda:=0]$ by $(\nu[\lambda:=0])(x)=0$ if $x\in\lambda$, and otherwise $(\nu[\lambda:=0])(x)=\nu(x)$. 

A parameter valuation is a function $\pv:\params\to\QP$ assigning a non-negative rational to each parameter.

A clock valuation $\nu$ and a parameter valuation $\pv$ satisfy a clock constraint $\phi$, written $(\nu,\pv)\models\phi$, if the expression obtained from $\phi$ by replacing each parameter $p$ by $\pv(p)$ and each clock $x$ by $\nu(x)$ evaluates to true.

A \emph{parametric timed automaton} is a tuple $\A=(\Sigma, \locs, \locs_0, \clocks, \params, \edges, \locs_F)$, where
\begin{itemize}
\item $\locs$ is a finite set of locations,
\item $\locs_0\subseteq \locs$ is the set of \emph{initial} locations,
\item $\edges \subseteq \locs\times\Sigma\times\Phi(\clocks,\params)\times 2^\clocks\times\locs$ is a finite set of \emph{edges},
\item $\locs_F\subseteq\locs$ is the set of \emph{final} locations. 
\end{itemize}
Each edge $(\loc,a,\phi,\lambda,\loc')$ represents a discrete transition from $\loc$ to $\loc'$ on the input symbol $a$. The clock constraint $\phi$ specifies the bounds on the value of the clocks, and the set $\lambda$ specifies the clocks to be reset to zero.

A \emph{global state} of $\A$ is a pair $(\loc,\nu)$, where $\loc\in\locs$ represents the current location, and the clock valuation $\nu$ represents the current values of all clocks. 
The behaviour of $\A$ depends upon the current global state and the parameter valuation. 
Each parameter valuation $\pv$ induces a $(\Sigma,\RP)$-labelled transition relation $\tau_\pv$ over the set of all global states of $\A$ as follows: 
$\langle(\loc,\nu),(a,\delta),(\loc',\nu')\rangle\in\tau_\pv$, where $a\in\Sigma$ and $\delta\in\RP$, if, and only if, there is an edge $(\loc,a,\phi,\lambda,\loc')\in\edges$ such that for all clocks $x\in\clocks$ we have $(\nu(x)+\delta, \pv)\models\phi$, and $\nu'=(\nu(x)+\delta)[\lambda:=0]$.
A $\pv$-run of $\A$ is a finite sequence 
$\Pi_{1\leq i\leq k}\langle (\loc_{i-1},\nu_{i-1}),(a_i,\delta_i),(\loc_i,\nu_i)\rangle$ such that $\langle (\loc_{i-1},\nu_{i-1}),(a_i,\delta_i),(\loc_i,\nu_i)\rangle \in \tau_\pv$ 
for every $i\in\{1,\dots,k\}$. 
A $\pv$-run is \emph{successful} if $\loc_0 \in \locs_0$, $\nu_0(x)=0$,  and $\loc_k\in\locs_F$. 

A {\em timed word} is a non-empty 
finite sequence $(a_1,t_1)\dots(a_k,t_n)\in(\Sigma\times\RP)^+$ such that the sequence $t_1,\dots,t_n$ of timestamps is non-decreasing.
We say that a timed word is \emph{strictly monotonic} if $t_1,\dots,t_n$ is strictly increasing. 
We use $\TW$ to denote the set of finite timed words over $\Sigma$.
A set $L\subseteq \TW$ is called a {\em timed language}.

Given a parametric timed automaton $\A$ and a parameter valuation $\pv$, we associate with each $\pv$-run $\Pi_{1\leq i\leq k}\langle (\loc_{i-1},\nu_{i-1}),(a_i,\delta_i),(\loc_i,\nu_i)\rangle$ the timed word $(a_1,\delta_1)(a_2,\delta_1+\delta_2)\dots (a_k,\sum_{1\leq i\leq k}\delta_k)$. 
We define $L_\pv(\A)$ to be the set of timed words $w$ for which there is a successful $\pv$-run of $\A$ that is associated with $w$.
A parameter valuation $\pv$ is \emph{consistent with $\A$} if $L_\pv(\A)$ is not empty. 
We use $\Pi(\A)$ to denote the set of parameter valuations that are consistent with $\A$. 

We say that a parametric timed automaton $\A$ is \emph{deterministic} 
if $\locs_0$ is a singleton, and whenever ${(\loc,a,\phi_1,\lambda_1,\loc_1)}$ and ${(\loc,a,\phi_2,\lambda_2,\loc_2)}$ are two different edges in $\A$, then for all parameter valuations $\pv$ and clock valuations $\nu$ we have ${(\nu,\pv)\not\models \phi_1\wedge\phi_2}$.

\begin{example}
	Figure \ref{figure_ex} shows a parametric timed automaton over the alphabet $\Sigma=\{a,b\}$ using a parametric clock $x$ and a clock $y$, and one parameter $p$. 
	Assume $\pv(p)=n^{-1}$ for some $n\in\N$. 
	Then $L_\pv(\A)$ contains a single timed word, namely $(a,\pv(p))(a,2\pv(p))\dots (a,n\pv(p))(b,(n+1)\pv(p))\dots(b,2n\pv(p))$. 
	For all other parameter valuations $\pv$, $L_\pv(\A)=\emptyset$, \ie, they are not consistent with $\A$. 
	Hence we have $\Pi(\A) = \{\pv\mid \pv(p)=n^{-1} \text{ for some } n\in\N\}$. 
	Note that $\A$ is not deterministic, but it can be made deterministic by adding the clock constraint $y<1$ to the loops in locations $1$ and $2$.
\end{example}

\begin{figure}
\begin{center}
\begin{picture}(80,18)(0,-18)
\node[NLangle=0.0,Nmarks=i,ilength=3,Nw=4.0,Nh=4.0,Nmr=2.0](n0)(5.0,-14.0){$1$}
\node[NLangle=0.0,Nw=4.0,Nh=4.0,Nmr=2.0](n1)(40.0,-14){$2$}
\node[NLangle=0.0,Nmarks=f,flength=3,Nw=4.0,Nh=4.0,Nmr=2.0](n2)(75.0,-14){$3$}
\drawloop[loopdiam=4](n0){}
\put(1,-3.5){\footnotesize{$a,x=p$}}
\put(2,-7){\footnotesize{$x:=0$}}
\drawedge[curvedepth=4.0](n0,n1){\footnotesize{$a, x=p,y=1$}}
\put(16,-14){\footnotesize{$x,y:=0$}}
\put(36,-3.5){\footnotesize{$b,x=p$}}
\put(37,-7){\footnotesize{$x:=0$}}
\drawloop[loopdiam=4](n1){}
\drawedge[curvedepth=4.0](n1,n2){\footnotesize{$b, x=p,y=1$}}
\end{picture}
\caption{A parametric timed automaton $\A$.}
\label{figure_ex}
\end{center}
\end{figure}
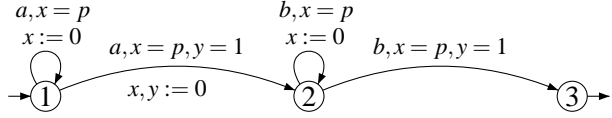

\section{Metric Temporal Logic}

The set of MTL formulae is built up from $\Sigma$ 
        by boolean connectives and a constraining version of the {\em until}
modality:
        $$\varphi \ndefeq a \sep \neg\varphi \sep \varphi_1\wedge\varphi_2 \sep
        \varphi_1\U_{I}\varphi_2  $$
        where $a\in\Sigma$ and $I \subseteq \RP$ is an open, closed, or half-open interval with endpoints in $\N\cup\{\infty\}$.
        Note that we do \emph{not} allow parameters as endpoints.
If $I=\RP$, then we may omit the annotation $I$ on $\U_I$.

We interprete MTL formulae in the \emph{pointwise semantics}, \ie, over finite timed words over $\Sigma$. 
Let $w=(a_1,t_1)(a_2,t_2)\dots(a_n,t_n)$ be a timed word, and let  $i\in\{1,\dots,n\}$. 
We define the {\em satisfaction relation for MTL}, denoted by $\models$,
inductively as follows:
\begin{flalign*}
	(w,i) \models a &\hspace{2mm}\Leftrightarrow\hspace{2mm} a_i=a\\
	 (w,i) \models
\neg\varphi & \hspace{2mm}\Leftrightarrow\hspace{2mm} (w,i)\not\models \varphi, \\
	 (w,i) \models \varphi_1\wedge\varphi_2 & \hspace{2mm}\Leftrightarrow\hspace{2mm}  
	(w,i)\models \varphi_1 \textrm{ and } (w,i)\models \varphi_2,\\
	 (w,i)\models \varphi_1\U_{I}\varphi_2 & \hspace{2mm}\Leftrightarrow\hspace{2mm} \exists j.
i<j\leq |w|: (w,j)\models\varphi_2 \textrm{ and } t_j - t_i \in I, \textrm{
and } \forall k.i< k<j:(w,k)\models\varphi_1.
\end{flalign*}
We say that a timed word $w\in\TW$ satisfies an MTL formula $\varphi$, written $w\models\varphi$, if $(w,1)\models \varphi$.
Given an MTL formula $\varphi$, we define $L(\varphi)\defeq \{w\in T\Sigma^+\mid w\models \varphi\}$. 
We use the following syntactical abbreviations:
$\varphi_1\vee\varphi_2 \defeq\neg(\neg\varphi_1\wedge\neg\varphi_2)$,
$\varphi_1\rightarrow \varphi_2 \defeq \neg\varphi_1\vee\varphi_2$, 
$\true \defeq p \vee \neg p$, 
$\false \defeq \neg\true$,
$\X_I\varphi \defeq \false \U_I\varphi$, 
$\finally_I\varphi\defeq\true\U_I\varphi$,
$\glob_I\varphi \defeq\neg\finally_I\neg\varphi$.
Observe that the use of the \emph{strict} semantics for the until modality is
essential to derive the next modality. 

\problemx{MTL-Model Checking Problem for Parametric Timed Automata}{A parametric timed automaton $\A$, an MTL formula $\varphi$.}{Is there some parameter valuation $\pv$ such that for every $w\in L_\pv(\A)$ we have $w\models\varphi$?}

In general, the MTL-model checking problem is undecidable for parametric timed automata. 
This follows from the undecidability of the emptiness problem for parametric timed automata with three or more parametric clocks~\cite{DBLP:conf/stoc/AlurHV93}.
In the next section, 
we prove the undecidability of the MTL-model checking problem for parametric timed automata using one parametric clock and one parameter.

\section{Main Result}
\begin{theorem}
	\label{thm_main}
	The MTL-model checking problem for parametric timed automata is undecidable, even if the automaton uses only one clock and one parameter and is deterministic.
\end{theorem}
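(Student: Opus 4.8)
The plan is to reduce from the halting problem for error-free channel machines, which is undecidable because such machines (one FIFO channel, a finite channel alphabet, and the usual read/write/test instructions) are Turing-powerful. Fix such a machine $\cm$. I will build, computably from $\cm$, a \emph{deterministic} parametric timed automaton $\A$ with a single clock $x$ and a single parameter $p$, together with an MTL formula $\varphi$, so that the answer to the model-checking instance $(\A,\varphi)$ is determined by whether $\cm$ halts.

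First I would recall the Ouaknine--Worrell encoding~\cite{DBLP:conf/fossacs/OuaknineW06} of channel-machine computations by finite timed words: a computation $c_0\to c_1\to\cdots\to c_k$ is written down so that configuration $c_i$ occupies the unit-length time window $[i,i+1)$, the control state and the channel contents of $c_i$ are spelled out by letters placed at increasing timestamps inside that window, and the fact that a channel letter persists from $c_i$ to $c_{i+1}$ is witnessed by the same letter recurring exactly one time unit later. There is an MTL formula $\varphi_\cm$ — using only integer-valued interval constraints, essentially $\finally_{[1,1]}$, $\glob$, and bounded-horizon $\finally$ and $\glob$ modalities — that holds on exactly those timed words that encode a halting computation of $\cm$ \emph{in which insertion errors are permitted}, i.e.\ spurious extra letters may enter the channel. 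Crucially, MTL in the pointwise, future-only semantics cannot strengthen $\varphi_\cm$ to forbid insertions: recognising a channel letter that has no predecessor one time unit earlier would need a past-time modality, or a counting ability, that MTL lacks.

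The heart of the construction is to use the single parametric clock to forbid exactly these insertions. I would design $\A$ so that, over the parameter valuations that matter, $L_\pv(\A)$ consists of timed words obeying a rigid $p$-controlled timing skeleton — concretely, within each configuration window the channel-letter events sit on a $p$-spaced, strictly increasing grid, and consecutive windows are exactly one time unit apart; both properties are enforceable with one clock by a loop of the form ``reset $x$; let time elapse; check $x=p$; read a letter'', and determinism is obtained by attaching suitable concrete guards to the loops, as in the example above. The point of the skeleton is that, together with the exact one-time-unit ``copy-forward'' already demanded by $\varphi_\cm$, it makes an insertion impossible: a spurious letter in window $i+1$ would have to land either on a grid slot already taken by the forward copy of a window-$i$ letter — excluded because timestamps on the grid are strictly increasing — or off the grid, or beyond the copied region, each of which is then detectable, either by $\A$'s guards or by $\varphi_\cm$ as an illegitimate growth of the channel. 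Hence, relative to $L_\pv(\A)$, the formula $\varphi_\cm$ pins down genuinely error-free halting computations, and error-free halting is undecidable.

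Finally I would combine $\A$ and $\varphi$ (taking $\varphi$ to be $\varphi_\cm$ or its negation, as the bookkeeping dictates) so that ``there exists a parameter valuation $\pv$ with $w\models\varphi$ for every $w\in L_\pv(\A)$'' holds iff $\cm$ does, respectively does not, halt. I expect the main obstacle to lie precisely in this last assembly, for two intertwined reasons. First, a halting computation must be realisable on the $p$-skeleton for \emph{every} relevant value of $p$ — arranged by letting the machine idle in its halt state so that a halting run can be padded to match any grid — and the skeleton must still kill \emph{all} insertions, including those in the middle of the channel and those created indirectly by a ``read'' step, which typically forces the encoding to keep each channel letter pinned to a fixed grid slot throughout its life and to use auxiliary marker letters. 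Second, and more delicate: the existential parameter quantifier of the model-checking problem must not be satisfiable by degenerate valuations — in particular those $\pv$ for which $L_\pv(\A)$ is empty and the universal condition holds vacuously — so $\A$ has to be engineered to have a nonempty, suitably rich language for \emph{every} $\pv$, with all syntactic well-formedness checks pushed into $\varphi$. Making these two requirements coexist is the crux; the remaining work is the lengthy but routine MTL bookkeeping in the style of~\cite{DBLP:conf/fossacs/OuaknineW06}.
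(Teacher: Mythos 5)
Your overall route coincides with the paper's: reduce from control-state reachability of channel machines, use an Ouaknine--Worrell-style MTL formula $\varphi_\cm$ whose models are encodings of halting computations \emph{up to insertion errors}, use a deterministic one-clock, one-parameter automaton to exclude the insertions, and model check against $\neg\varphi_\cm$. But the proposal stops exactly where the technical content lies. First, the insertion-excluding mechanism is only asserted, not established. The paper does \emph{not} put every configuration window on a $p$-spaced grid: $\A_\cm$ (Fig.~\ref{figure_A}) constrains only the block of hashes right after $s_I$ and the block right after $s_F$, reading everything in between with an unguarded loop; the reason this suffices is Lemma~\ref{lemma_function}, which shows that the fractional part of every symbol is propagated into all later blocks, so block sizes can never decrease, and forcing the first and last blocks to consist of exactly $n$ symbols at spacing $1/(n+1)$ (with the label, resp.\ $\star$, one time unit after the state symbol) freezes the block size at $n$ throughout, which by Lemma~\ref{lemma_word_to_comp} rules out insertions. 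Your alternative --- enforce the grid in every window and keep each channel letter ``pinned to a fixed grid slot throughout its life'' --- is in direct tension with how a legitimate read step is encoded (for $l=m?$ each surviving symbol is copied two time units after its \emph{predecessor}, i.e.\ it changes slot), and you supply no substitute for Lemma~\ref{lemma_function}; so the claim that any spurious letter is ``detectable, either by $\A$'s guards or by $\varphi_\cm$'' is not proved.

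Second, you explicitly defer the final assembly (``the crux''), so the proposal never actually produces a reduction, and the repair you anticipate --- engineer $\A$ so that $L_\pv(\A)$ is nonempty and suitably rich for \emph{every} $\pv$, with all well-formedness pushed into $\varphi$ --- is not how the paper proceeds. The paper proves that $\cm$ has an error-free halting computation iff there exist $n$ and $\pv$ with $L(\cm,n)\cap L_\pv(\A_\cm)\neq\emptyset$ (Corollary~\ref{col}), and then takes the model-checking instance $(\A_\cm,\neg\varphi_\cm)$, using that a nonempty intersection for some $\pv$ is the same as $L_\pv(\A_\cm)\not\subseteq L(\neg\varphi_\cm)$; valuations with the ``wrong'' $p$ simply give an empty intersection with $L(\cm)$, rather than being made rich. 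You are right that the interplay between the existential quantifier over $\pv$ and the universal quantifier over runs is the delicate point of the whole construction, but identifying that difficulty is not the same as resolving it: without (a) an encoding-and-monotonicity lemma in the role of Lemma~\ref{lemma_function} and (b) a worked-out final equivalence in the role of Corollary~\ref{col}, what you have is an outline of the paper's argument rather than a proof.
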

The remainder of this section is devoted to the proof of Theorem \ref{thm_main}. 
The proof is a reduction of the control state reachability problem for channel machines, which we introduce in the following. 

\subsection{Channel Machines}
Let $\Gamma$ be a finite alphabet. 
We use $\varepsilon$ to denote the \emph{empty word} over $\Gamma$.
Given two finite words $x,y\in \Gamma^*$, we use $x\cdot y$ to denote the \emph{concatenation} of $x$ any $y$. 
We define the order $\leq$ over the set of finite words over $\Gamma$ by 
$x_1 x_2 \dots x_m \leq y_1 y_2 \dots y_n$ if there exists a strictly increasing function $f:\{1,\dots,m\}\to\{1,\dots,n\}$ such that $x_i = y_{f(i)}$ for every $i\in\{1,\dots,m\}$. 

A \emph{channel machine} consists of a finite-state automaton acting on an  unbounded fifo channel. Formally, 
a channel machine is a tuple $\cm=(S,s_I,M,\Delta)$, where
\begin{itemize}
\item $S$ is a finite set of \emph{control states},
	\item $s_I\in S$ is the initial control state,
	\item $M$ is a finite set of \emph{messages},
	\item $\Delta \subseteq S\times L\times S$ is the transition relation over the label set $L=\{m!,m?\mid  m\in M\}\cup\{\varepsilon\}$.
\end{itemize}
A \emph{configuration} of $\cm$ is a tuple $(s,x)$, where $s\in S$ is the control state and $x\in M^*$ represents the contents of the channel.
The rules in $\Delta$ induce an $L$-labelled transition relation $\to$ over the set of configurations of $\cm$ as follows:
\begin{itemize}
\item $\langle (s,x),m!,(s',x')\rangle\in\to$ if, and only if, there exists some transition $(s,m!,s')\in\Delta$, $x\in\Sigma^*$, and $x'=x\cdot m$, \ie, $m$ is added to the tail of the channel.
\item $\langle (s, x),m?,(s',x')\rangle\in\to$ if, and only if, there exists some transition $(s,m?,s')\in\Delta$, $x'\in\Sigma^*$, and $x=m\cdot x'$, \ie, $m$ is the head of the current channel content. 
\item $\langle (s,x),\varepsilon,(s',x')\rangle\in\to$ if, and only if, there exists some transition $(s,\varepsilon,s')\in\Delta$ and $x=\varepsilon$, \ie, the channel is empty, and $x'=x$. 
\end{itemize}
Next, we define another $L$-labelled transition relation $\leadsto$  over the set of configurations of $\cm$. The relation $\leadsto$ is a superset of $\to$. It contains some additional transitions which result from \emph{insertion errors}. We define $\langle (s,x_1),l,(s,x'_1)\rangle\in\leadsto$, if, and only if, $\langle (s,x), l,(s',x')\rangle\in\to$, $x_1\leq x$, and $x'\leq x'_1$.
A computation of $\cm$ is a finite sequence $\Pi_{1\leq i\leq k} \langle (s_{i-1},x_{i-1}),l_i,(s_i,x_i)\rangle$ such that $\langle (s_{i-1},x_{i-1}),l_i,(s_i,x_i)\rangle\in\leadsto$ for every $i\in\{1,\dots,k\}$. 
We say that a computation is \emph{error-free} if for all $i\in\{1,\dots,k\}$ we have $\langle (s_{i-1},x_{i-1}),l_i,(s_i,x_i)\rangle\in\to$.
Otherwise, we say that the computation is \emph{faulty}.
\problemx{Control State Reachability Problem for Channel Machines}{A channel machine $\cm$ with control states $S$, a control state $s_F\in S$.}{Is there an error-free computation of $\cm$ from $(s_I,\varepsilon)$ to $(s_F,x)$ for some $x\in M^*$?}
The control state reachability problem is undecidable for channel machines, because channel machines are Turing-powerful~\cite{Brand:1983:CFM:322374.322380,DBLP:journals/fuin/AbdullaDOQW08}.

\subsection{Encoding Faulty Computations}
For the remainder of this section, let $\cm=(S,s_I,M,\Delta)$ be a channel machine and let $s_F\in S$.
We construct an MTL formula $\varphi_\cm$ that is satisfiable if, and only if,
there exists some $x\in M^*$ such that $\cm$ has a computation from $(s_I,\varepsilon)$ to $(s_F,x)$ that may be faulty. 
Later we are going to define a parametric timed automaton $\A_\cm$ with one clock and one parameter to exclude faulty computations from $L(\varphi_\cm)$.

Let $\Sigma = S\cup M\cup L\cup \{\#,\star\}$, where $\#$ and $\star$ do not occur in $S\cup M\cup L$. 
We start with defining a timed language $L(\cm)$ over $\Sigma$ that consists of all timed words that encode (potentially faulty) computations of $\cm$ from $(s_I,\varepsilon)$ to $(s_F,x)$ for some $x\in M^*$. 
The definition of $L(\cm)$ follows the ideas presented in~\cite{DBLP:conf/fossacs/OuaknineW06}.
Let $\gamma\defeq \Pi_{1\leq i\leq k} \langle (s_{i-1},x_{i-1}),l_i,(s_i,x_i)\rangle$ be a computation of $\cm$ with $s_0=s_I$, $x_0=\varepsilon$, and $s_k=s_F$.
Each configuration $(s_i,x_i)$ occurring in $\gamma$ is encoded by a timed word of duration one starting with $s_0$ at time $\delta$ for some arbitrary $\delta\in\RP$. 
Every symbol $s_i$ is followed by $l_{i+1}$ after one time unit, and by $s_{i+1}$ after two time units. 
The content $x_i$ of the channel is stored in the time interval between $s_i$ and $l_{i+1}$. 
Note that due to the denseness of the time domain we can indeed store the channel content without any restriction on its length. 
An important detail of the definition of $L(\cm)$ is that for every message symbol $m$ between $s_i$ and $l_{i+1}$, there is a copy in the encoding of the next configuration exactly two time units later, unless the label of the current transition is $m?$. In that case, the symbol $m$ is simply removed from the encoding of the configuration.

For our reduction to work, we have to change the idea in some details. 
First, we define a timed language $L(\cm,n)$ for every $n\in \N$, where $n$ is non-deterministically chosen and is supposed to represent the expected maximum length of the channel content during a computation. The empty channel in the initial configuration will be represented by a timed word with $n$ hash symbols between $s_0$ and $l_1$. 
Second, we put a stronger condition on the copy policy of the messages. 
We require that for 
every hash symbol between $s_0$ and $l_1$ there is a message or hash symbol with \emph{the same fractional part} between $s_i$ and $l_{i+1}$ for every $i\in\{1,\dots,k-1\}$. 
In Fig. 2, we present some examples to explain the details. 
(a) If the current instruction is of the form $m_1!$ for some $m_1\in M$, then in the encoding of the next configuration, the first hash symbol between the control state symbol and the next label symbol is replaced by $m_1$.
(b) If in the encoding of the current configuration there is no hash symbol left, \ie, the expected maximum length of the channel content is exceeded, then  a new symbol $m_1$ is inserted at the end of the encoding of the next configuration. The timestamp of the newly inserted event can be any time strictly between the timestamps of the last message symbol and the next label symbol.
(c) If the current instruction is of the form $m_1?$ and the first symbol in the encoding of the current configuration is $m_1$, then we replace $m_1$ by a new hash symbol at the end of the encoding of the next configuration, and additionally shift the fractional parts of the timestamps of the copies of all remaining symbols for one position to the right. 
(d) If the first symbol is not $m_1$, \ie, an insertion error is occurring, then we  insert a new hash symbol at the end of the encoding of the next configuration. Next, we give the formal definition of $L(\cm,n)$.
\begin{figure}
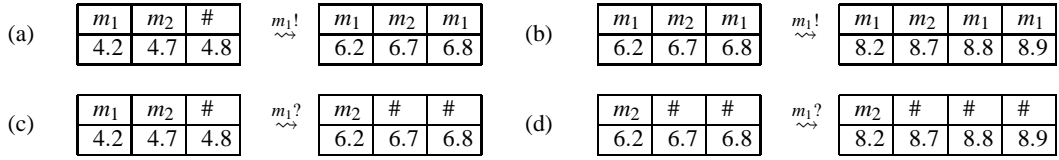

\begin{center}
	\footnotesize{
		\begin{tabular}{p{5mm}lp{2mm}lp{7mm}lp{2mm}l}
			(a) & \begin{tabular}{|p{3mm}|p{3mm}|p{3mm}|}
		\hline
		
		$m_1$ &  $m_2$ & $\#$ \\
		\hline
		$4.2$ &  $4.7$ & $4.8$ \\
		\hline
	\end{tabular} & $\buildrel m_1! \over \leadsto$ & \begin{tabular}{|p{3mm}|p{3mm}|p{3mm}|}
		\hline
		
		$m_1$  & $m_2$ & $m_1$ \\
		\hline
		$6.2$ &  $6.7$ & $6.8$ \\
		\hline
	\end{tabular} & \ \ (b) & \begin{tabular}{|p{3mm}|p{3mm}|p{3mm}|}
		\hline
		
		$m_1$  & $m_2$ & $m_1$ \\
		\hline
		$6.2$ &  $6.7$ & $6.8$ \\
		\hline
	\end{tabular} & $\buildrel m_1! \over \leadsto$ & \begin{tabular}{|p{3mm}|p{3mm}|p{3mm}|p{3mm} |}
		\hline
		
		$m_1$  & $m_2$ & $m_1$ & $m_1$ \\
		\hline
		$8.2$ & $8.7$ & $8.8$ & $8.9$\\
		\hline
	\end{tabular} \\
	& & &\\
	(c) & \begin{tabular}{|p{3mm}|p{3mm}|p{3mm}|}
		\hline
		
		$m_1$ &  $m_2$ & $\#$ \\
		\hline
		$4.2$ &  $4.7$ & $4.8$ \\
		\hline
	\end{tabular} & $\buildrel m_1? \over \leadsto$ & \begin{tabular}{|p{3mm}|p{3mm}|p{3mm}|}
		\hline
		
		$m_2$  & $\#$ & $\#$ \\
		\hline
		$6.2$ &  $6.7$ & $6.8$ \\
		\hline
	\end{tabular} & \ \ (d) & \begin{tabular}{|p{3mm}|p{3mm}|p{3mm}|}
		\hline
		
		$m_2$  & $\#$ & $\#$ \\
		\hline
		$6.2$ &  $6.7$ & $6.8$ \\
		\hline
	\end{tabular} & $\buildrel m_1? \over \leadsto$ & \begin{tabular}{|p{3mm}|p{3mm}|p{3mm}|p{3mm} |}
		\hline
		
		$m_2$  & $\#$ & $\#$ & $\#$ \\
		\hline
		$8.2$ & $8.7$ & $8.8$ & $8.9$\\
		\hline
	\end{tabular}
\end{tabular}
	}
	
\label{figure_copy}
\end{center}
\caption{Encoding of the channel content}
\end{figure}
Let $n\in \N$. The timed language $L(\cm,n)$ consists of all timed words $w$ over $\Sigma$ that satisfy the following conditions:
\begin{itemize}
\item $w$ must be strictly monotonic. 
	
\item In $w$, every control state symbol $s$ different from $s_F$ is followed  by a label symbol $l$ after one time unit, and by a control state symbol $s'$ after two time units, provided that $(s,l,s')\in \Delta$. The symbol $s_F$  is followed by $\star$ after one time unit. Control state symbols, label symbols and the symbol $\star$ must not occur anywhere else in $w$.  
	
	
\item Symbols in $M\cup\{\#\}$ may occur in $w$ between a control state symbol and a label symbol. They may not occur anywhere else in $w$. 
	
\item Between a control state and a label symbol, hash symbols $\#$ may only occur after message symbols $m\in M$.
	
\item The (untimed) prefix of $w$ must be of the form $s_I \#^n l s$ for some $l\in L,s\in S$.
	
\item $w$ must contain $s_F$. 
\end{itemize}
Assume that $w$ contains the infix $(s,\delta)(\sigma_1,\delta+\delta_1)(\sigma_2,\delta+\delta_2)\dots(\sigma_m,\delta+\delta_m)(l,\delta+1)$ for some $s\in S\backslash\{s_F\}$, $l\in L$, $\delta\in\RP$ and $0<\delta_1<\delta_2<\dots<\delta_m<1$.
\begin{itemize}
	
\item If $l=\varepsilon$,
	then $\sigma_i=\#$ for all $i\in\{1,\dots,m\}$ (\ie, the channel is indeed empty), and for each $\sigma_i$ there is a copy two time units later. 
	
\item If $l=m!$,
	then we distinguish between two cases:
	If there is some $i\in\{1,\dots,m\}$ such that $\sigma_i=\#$, 
	then \emph{replace} $\sigma_j$  by $m$ two time units later, where $j\in\{1,\dots,m\}$ is the smallest number such that $\sigma_j=\#$. For each $k\in\{1,\dots,m\}\backslash\{j\}$, there is a copy of $\sigma_k$ two time units later. 
	Otherwise, \ie, if for all $i\in\{1,\dots,m\}$ we have $\sigma_i\neq\#$,
	then for each $i\in\{1,\dots,m\}$, there is a copy of $\sigma_i$ two time units later. Further, a new symbol $m$ is added between the copy of $\sigma_m$ and the following symbol in $L\cup\{\star\}$. 
	Note that this corresponds to the case where $n$ has been chosen too small to capture the maximum length of the channel content during the computation.

\item If $l=m?$, then we distinguish between two cases:
	If $\sigma_1 = m$, 
	then for each $i\in\{2,\dots,m\}$,
	there is a copy of $\sigma_i$ two time units after the occurrence of $\sigma_{i-1}$. 
	Further there is a new hash symbol two time units after the occurrence of $\sigma_m$. 
	Otherwise, \ie, if $\sigma_1\neq m$, then there is a copy of $\sigma_i$ two time units later for every $i\in\{1,\dots,m\}$. 
	Further, the encoding of the next configuration contains an additional hash symbol 
	between the copy of $\sigma_m$ and the next symbol in $L\cup\{\star\}$. Note that this case corresponds to an \emph{insertion error}. 
\end{itemize}
Let $w_1=(a_1,t_1)\dots(a_k,t_k)$ and $w_2=(a'_1,t'_1)\dots(a'_{k'},t'_{k'})$ be two timed words.
If $t_k\leq t'_{1}$, then we define the \emph{concatenation} of $w_1$ and $w_2$, denoted by $w_1\cdot w_2$, to be the timed word  $(a_1,t_1)\dots(a_k,t_k)(a'_1,t'_1)\dots(a'_{k'},t'_{k'})$. 
Let $w\in L(\cm,n)$. 
We use $\max(w)$ to denote the maximum number of symbols in $M\cup\{\#\}$ that occur in $w$ between a control state symbol and a symbol in $L\cup\{\star\}$.
Clearly, every timed word in $L(\cm,n)$ is of the form 
$$ (s_0,\delta)\cdot w_1\cdot (l_1,\delta+1)(s_1,\delta+2)\cdot w_2\cdot(l_2,\delta+3)\dots(s_F,\delta+N)\cdot w_N \cdot (\star,\delta+N+1)$$
for some $\delta\in \RP$ and $N\in\N$, where $s_0=s_I$ and for every $i\in\{1,\dots,N\}$,  $w_i$ is of the form 
$$w_i=(\sigma^i_1,\delta+2(i-1)+\delta^i_1)(\sigma^i_2,\delta+2(i-1)+\delta^i_2)\dots (\sigma^i_{n_i},\delta+2(i-1)+\delta^i_{n_i})$$ for some $n_i\in \N$ with $n_1=n$,  and $0<\delta^i_1<\delta^i_2<\dots<\delta^i_{n_i}<1$.	In the following, whenever we refer to a timed word $w\in L(\cm,n)$, we assume that $w$ is of this form. 
The next lemma states that the fractional parts of the initial time delays $\delta^1_1,\dots,\delta^1_{n_1}$ are not lost. This will be important later. 
\begin{lemma}
	\label{lemma_function}
	Let $n\in \N$ and let $w\in L(\cm,n)$.
	For every $i\in\{1,\dots,N-1\}$
	there exists a strictly increasing function $f_i:\{1,\dots,n_i\}\to\{1,\dots,n_{i+1}\}$ such that $\delta^{i}_j=\delta^{i+1}_{f_i(j)} $ for every $j\in\{1,\dots,n_i\}$. 
\end{lemma}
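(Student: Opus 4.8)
Fix $n\in\N$ and $w\in L(\cm,n)$ written in the standard form, and fix $i\in\{1,\dots,N-1\}$. By the displayed decomposition of $w$, the block $w_i$ occurs in $w$ as the infix of message/hash symbols between a control-state symbol $s\in S\setminus\{s_F\}$ (at some time $\tau$) and the next label symbol $l$ (at time $\tau+1$), and $w_{i+1}$ occurs as the corresponding block between the subsequent control-state symbol (at time $\tau+2$) and the subsequent symbol of $L\cup\{\star\}$ (at time $\tau+3$); note $s\neq s_F$ precisely because $i\le N-1$. Hence exactly one of the copy-policy bullets in the definition of $L(\cm,n)$ applies to $l$, and it describes, for each symbol of $w_i$, where the corresponding symbol(s) occur ``two time units later'' -- that is, inside $w_{i+1}$. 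The plan is to read off from these bullets that in every case $n_{i+1}\in\{n_i,n_i+1\}$ and $\delta^{i+1}_j=\delta^i_j$ for all $j\in\{1,\dots,n_i\}$; then $f_i$ defined by $f_i(j)=j$ maps $\{1,\dots,n_i\}$ into $\{1,\dots,n_{i+1}\}$, is strictly increasing, and satisfies $\delta^i_j=\delta^{i+1}_{f_i(j)}$, as required.

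\textbf{The case analysis.} If $l=\varepsilon$, each $\sigma^i_j$ has a copy two time units later, so $w_{i+1}$ has the same timestamps shifted by $2$, giving $n_{i+1}=n_i$ and $\delta^{i+1}_j=\delta^i_j$. If $l=m!$, then either the first hash is relabelled to $m$ two time units later while all other symbols are copied (so again $n_{i+1}=n_i$ and $\delta^{i+1}_j=\delta^i_j$), or every $\sigma^i_j$ is copied and a fresh $m$ is inserted after the copy of $\sigma^i_{n_i}$ and before the next $L\cup\{\star\}$-symbol, i.e.\ at a fractional part strictly between $\delta^i_{n_i}$ and $1$; in that case $n_{i+1}=n_i+1$ and $\delta^{i+1}_j=\delta^i_j$ for $j\le n_i$. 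If $l=m?$ and $\sigma^i_1\neq m$, the situation is the same as the second $m!$-subcase with a fresh $\#$ in place of the fresh $m$. The only genuinely different case is $l=m?$ with $\sigma^i_1=m$: here the copy of $\sigma^i_j$ (for $j\ge 2$) is placed two time units after $\sigma^i_{j-1}$, hence at fractional part $\delta^i_{j-1}$, and a fresh $\#$ is placed two time units after $\sigma^i_{n_i}$, hence at fractional part $\delta^i_{n_i}$; thus $w_{i+1}$ consists of $n_{i+1}=n_i$ symbols whose fractional parts form exactly the set $\{\delta^i_1,\dots,\delta^i_{n_i}\}$, and since this set has $n_i$ elements, sorting gives $\delta^{i+1}_j=\delta^i_j$ again.

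\textbf{Main obstacle.} Everything is direct bookkeeping against the definition of $L(\cm,n)$ except the subcase $l=m?$, $\sigma^i_1=m$, where the natural ``follow the copy of each symbol'' correspondence is shifted by one position (because the head is consumed and a new $\#$ is appended at the tail). The point that makes the lemma go through anyway is that $f_i$ is not required to track the descendant relation between symbols -- only the values of the fractional parts matter, and the reindexing leaves the \emph{set} of occupied fractional parts of the block unchanged. Once this is observed, taking $f_i$ to be the (restricted) identity works uniformly across all cases, and strict monotonicity of $f_i$ is automatic from $0<\delta^i_1<\dots<\delta^i_{n_i}<1$ and $0<\delta^{i+1}_1<\dots<\delta^{i+1}_{n_{i+1}}<1$.
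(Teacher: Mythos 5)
There is a genuine gap, and it is the central point of the lemma rather than a side issue. Your case analysis assumes that the copy conditions in the definition of $L(\cm,n)$ determine the block $w_{i+1}$ \emph{exactly}, so that $n_{i+1}\in\{n_i,n_i+1\}$ and $\delta^{i+1}_j=\delta^i_j$ positionally, whence the identity map works. But the definition only requires that for each $\sigma^i_j$ there \emph{is} a copy (or replacement) two time units later; it does not forbid additional symbols of $M\cup\{\#\}$ from appearing anywhere inside the next block. This is by design: $L(\cm,n)$ is meant to coincide with the language of an MTL formula, and MTL (future-only, pointwise) can say ``every symbol now has a copy in two time units'' but cannot say ``every symbol now arose as a copy of a symbol two time units ago''; spontaneous insertions are exactly the errors that cannot be excluded by the formula and are later ruled out by the parametric timed automaton $\A_\cm$. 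Consequently $n_{i+1}$ may exceed $n_i+1$ by an arbitrary amount, the inserted symbols may be interleaved anywhere among the copies, and the identity map in general fails to satisfy $\delta^i_j=\delta^{i+1}_j$. This is also why the lemma is stated as the existence of a strictly increasing \emph{embedding} rather than an equality of blocks, and why the main theorem's proof must still invoke the automaton to force the suffix length $m$ to equal $n$; under your reading those steps would be pointless, which is a sign the reading is too strong.

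The repair is essentially the paper's argument: for each $j$, the symbol $\sigma^i_j$ has, by the applicable bullet, a copy or replacement occurring exactly two time units later, i.e.\ a symbol of $w_{i+1}$ with the same fractional offset $\delta^i_j$ (in the $l=m?$, $\sigma^i_1=m$ case the copies are shifted by one position, but — as you correctly observe — the set of occupied offsets $\{\delta^i_1,\dots,\delta^i_{n_i}\}$ is still entirely present in $w_{i+1}$). Define $f_i(j)$ to be the index in $w_{i+1}$ of a symbol with offset $\delta^i_j$; since the offsets $\delta^i_1<\dots<\delta^i_{n_i}$ are distinct and the offsets of $w_{i+1}$ are strictly ordered, $f_i$ is well defined and strictly increasing, and $n_i\le n_{i+1}$ follows. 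No claim about positional equality of offsets, nor any bound on $n_{i+1}-n_i$, is available or needed.
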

\begin{proof} 
	The proof is by induction on $N$.
	(Induction base:) 
	Observe that $\sigma^1_i=\#$ for every $i\in\{1,\dots,n_1\}$.
	Assume $l_1=\varepsilon$.
	Then for every $j\in\{1,\dots,n_1\}$, there is a copy of $\sigma^1_j$ two time units later.
	If $l_1 = m!$,
	then for every 
	$j\in\{2,\dots,n_1\}$, there is a copy of $\sigma^1_j$ two time units later, and $\sigma_1^1$ is replaced by $m$ two time units later. 
	If $l_1=m?$, 
	then for every $j\in\{1,\dots,n_1\}$, there is a copy of $\sigma^1_j$ two time units later, and there is an additional symbol $\#$ between the copy of $\sigma^1_{n_1}$ and $l_2$. 
	Whatever case, the definition of $L(\cm,n)$ does not exclude that new symbols in $M\cup \{\#\}$ are inserted somewhere between $s_1$ and $l_2$. 
	Thus we have $n_1\leq n_2$. 
	Moreover, since there is a copy for each symbol two time units later, 
	there exists a strictly increasing function $f:\{1,\dots,n_1\}\to\{1,\dots,n_2\}$ such that $ \delta^1_j= \delta^2_{f(j)}$ for every $j\in\{1,\dots,n_1\}$. 
	(Induction step)
	Assume that the claim holds for all $i\in\{1,\dots,k\}$. 
	We prove it also holds for $k+1$. 
	We only treat the two remaining cases.
	First, assume $l_{k+1}=m?$ and $\sigma_1^{k+1}=m$. 
	By definition, for every $j\in\{2,\dots,n_{k+1}\}$, 
	there is a copy of $\sigma_j^{k+1}$  two time units after the occurrence of symbol $\sigma_{j-1}^{k+1}$.
	Further, the first symbol $m$ is replaced by a new hash symbol two time units after the occurrence of $\sigma_{n_{k+1}}^{k+1}$. 
	Second, assume $l_{k+1}=m!$ and we have $\sigma_j^{k+1}\neq\#$ for every $j\in\{1,\dots,n_{k+1}\}$. Then, for each $j\in\{1,\dots,n_{k+1}\}$, there is a copy of $\sigma^{k+1}_j$ two time units later, and a new symbol $m$ is added after the copy of $\sigma_{n_{k+1}}^i$. 
	Whatever case, the definition of $L(\cm,n)$ does not exclude that new symbols in $M\cup\{\#\}$ are inserted between $s_{k+2}$ and $l_{k+2}$.
	Hence $n_{k+1}\leq n_{k+2}$. 
	Since for every $j\in\{1,\dots,n_{k+1}\}$ the symbol $\sigma_j^{k+1}$ is copied or replaced two time units later, there exists a strictly increasing function $f:\{1,\dots,n_{k+1}\}\to\{1,\dots,n_{k+2}\}$ such that $\delta^{k+1}_j = \delta^{k+2}_{f_{k+1}(j)}$ for every $j\in\{1,\dots,n_{k+1}\}$. 
\end{proof}
Let $\gamma\defeq \Pi_{1\leq i\leq k} \langle (s_{i-1},x_{i-1}),l_i,(s_i,x_i)\rangle$ be a finite computation of $\cm$. We use $\max(\gamma)$ to denote the maximum length of the channel content occurring in $\gamma$, formally: $\max(\gamma)\defeq\max\{|x_i| \mid 0\leq x_i\leq k\}$. 
\begin{lemma}
		\label{lemma_comp_to_word}
		For each error-free computation $\gamma$ of $\cm$ from $(s_I,\varepsilon)$ to $(s_F,x)$ for some $x\in M^*$,
		and every $\delta\in\RP$, $0<\delta_1<\delta_2<\dots<\delta_{\max(\gamma)}<1$, 
		there exists some timed word 
		$w\in L(\cm,\max(\gamma))$ such that the prefix of $w$ is of the form $(s_I,\delta)(\#,\delta+\delta_1)\dots(\#,\delta+\delta_{\max(\gamma)})(l_1,\delta+1)$ for some $l_1\in L$,  and $\max(w)=\max(\gamma)$. 
	\end{lemma}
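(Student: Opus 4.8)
The plan is to build the required timed word $w$ \emph{from} $\gamma$ directly, reading its transitions one at a time and committing, at each step, to the unique error-free option among the alternatives that the definition of $L(\cm,n)$ leaves open. Set $n\defeq\max(\gamma)$ and write $\gamma=\Pi_{1\le i\le k}\langle(s_{i-1},x_{i-1}),l_i,(s_i,x_i)\rangle$ with $s_0=s_I$, $x_0=\varepsilon$, $s_k=s_F$. The word $w$ will have the canonical form from the excerpt: the control-state symbols $s_0,\dots,s_k$ at times $\delta,\delta+2,\dots,\delta+2k$, the label symbols $l_1,\dots,l_k$ at the intermediate times $\delta+1,\delta+3,\dots$, a closing $(\star,\delta+2k+1)$ after $s_F$, and the block $w_i$ encoding $x_{i-1}$ placed in the unit interval immediately after $s_{i-1}$.

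The heart of the argument is the invariant maintained for the block $w_i$: it consists of \emph{exactly} $n$ events, the first $|x_{i-1}|$ of which spell out $x_{i-1}$ while the remaining $n-|x_{i-1}|$ are all $\#$, with fractional parts $\delta_1<\delta_2<\dots<\delta_n$ in this order. The base case $i=1$ is precisely the prescribed prefix $(s_I,\delta)(\#,\delta+\delta_1)\cdots(\#,\delta+\delta_n)(l_1,\delta+1)$, legal since $x_0=\varepsilon$ and $(s_I,l_1,s_1)\in\Delta$. For the inductive step I split on the form of $l_i$ and define $w_{i+1}$ by the matching clause of $L(\cm,n)$: if $l_i=\varepsilon$, error-freeness forces $x_{i-1}=\varepsilon$, so $w_i$ is all hashes and we copy every symbol two time units later; if $l_i=m!$, the length bound $|x_i|=|x_{i-1}|+1\le n$ gives $|x_{i-1}|<n$, so $w_i$ still contains a $\#$ and we take the ``replace the first $\#$ by $m$'' branch, which appends $m$ to the content and leaves every fractional part where it was; if $l_i=m?$, error-freeness gives $x_{i-1}=m\cdot x_i$, so the first symbol of $w_i$ is indeed $m$ and we take the branch copying the $j$-th symbol of $w_i$ two time units after the $(j-1)$-st symbol of $w_i$ and appending a fresh $\#$ two time units after the last symbol. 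In each case one checks directly that the new block again has $n$ symbols, content $x_i$ followed by hashes, and fractional parts $\delta_1<\dots<\delta_n$.

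It then remains to verify that $w\in L(\cm,n)$ and that the prefix and $\max$ are as claimed. Strict monotonicity follows from $0<\delta_1<\dots<\delta_n<1$ together with the disjoint unit intervals; the control-state/label pattern is inherited from $\gamma$ and the membership $(s_{i-1},l_i,s_i)\in\Delta$; no message or hash occurs outside a control-state--label block, and hashes occur only after messages inside a block, both by the invariant; the untimed prefix is $s_I\#^n l_1 s_1$; and $s_F=s_k$ occurs. The prefix of $w$ is literally the one in the statement, and since every block carries exactly $n$ symbols in $M\cup\{\#\}$ (the block after $s_F$ possibly fewer, never more), $\max(w)=n=\max(\gamma)$.

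I expect the $m?$ case to be the one delicate point: one must see that the apparently quirky clause ``place the copy two time units after the \emph{previous} symbol'' is exactly the re-synchronisation that compensates for the left-shift of the channel content on a dequeue, so that the block's fractional parts stay equal to $\delta_1<\dots<\delta_n$; and one must keep track of the fact that error-freeness is precisely the hypothesis that rules out the two branches — appending a new symbol on $m!$ when the block is full, and inserting a spurious $\#$ on $m?$ when the head is not $m$ — either of which would break the invariant that every block has exactly $n$ symbols. The remainder is routine timestamp bookkeeping.
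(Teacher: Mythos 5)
Your proposal is correct and takes essentially the same route as the paper's proof: an induction over the steps of $\gamma$, constructing the encoding block by block, using error-freeness to rule out the insertion branches (no read on a non-matching head, no spontaneous symbols) and $n=\max(\gamma)$ to guarantee a hash is available on every write, so that every block keeps exactly $n$ symbols with fractional parts $\delta_1<\dots<\delta_n$ and hence $\max(w)=n$. Your invariant is in fact slightly more explicit than the paper's (you also track that each block spells out $x_{i-1}$ followed by hashes), which is a harmless strengthening that the paper's argument uses implicitly anyway.
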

	\begin{proof}
		Let $\gamma$ be an errror-free computation of $\cm$ of the form
		$\Pi_{1\leq i\leq k} \langle (s_{i-1},x_{i-1}),l_i,(s_i,x_i)\rangle$ 
		where $s_0=s_I$, $x_0=\varepsilon$ and $s_k= s_F$. Further let $n=\max(\gamma)$. 
		Now assume $\delta\in\RP$ and $0<\delta_1<\delta_2<\dots<\delta_n<1$.  
		Clearly there is some $w\in L(\cm,n)$ whose prefix is of the form
		$u_1 = (s_I,\delta)(\#,\delta+\delta_1)\dots(\#,\delta+\delta_n) (l_1,\delta+1)$.
		We prove that there exists some $w\in L(\cm,n)$ such that $u_1$ is the prefix of $w$ and $\max(w)=n$, $\ie$,  for every $i\in\{1,\dots,k\}$, the number of symbols in $M\cup\{\#\}$ between  $s_{i-1}$ and  $l_{i}$  (and between $s_k$ and $\star$) is equal to $n$. The proof is by induction on $k$.
	
		(Induction base:) 
		Assume $l_1 = \varepsilon$. 
		By definition,
		there must be a copy for each $\#$ exactly two time units later. 
		The addition of  new symbols is not required.
		If $l_1=m!$,
		then by definition the first occurrence of $\#$ is \emph{replaced} by $m$ exactly two time units later, and for each of the remaining $\#$ there is a copy two time units later. The addition of  new symbols is not required. 
		Note that the case $m?$ cannot occur because $\gamma$ is error-free. 
		Hence,
		there exists some timed word $w\in L(\cm,n)$ whose prefix is of the form $u_1 \cdot u_2$, where
		$u_2 = (s_1,2+\delta) (\sigma_1^2,2+\delta+\delta_1)(\#,2+\delta+\delta_2)\dots(\#,2+\delta+\delta_n) (l_2,2+\delta+1)$ for some $\sigma_1^2\in M \cup \{\#\}$.

		(Induction step:)
		Assume there is some timed word $w\in L(\cm,n)$ whose prefix is of the form $u_1 \cdot \dots \cdot u_p$ for some $p<k$, 
		where 
		for every $i\in \{1,\dots,p\}$, $u_i$ is of the form
		$$(s_{i-1},2(i-1)+\delta)(\sigma_1^i,2(i-1)+\delta+\delta_1)(\sigma_2^i,2(i-1)+\delta+\delta_2)\dots(\sigma_n^i,2(i-1)+\delta+\delta_n)(l_i,2(i-1)+\delta+1)$$
		for some $\sigma_1^i,\dots,\sigma_n^i\in M\cup\{\#\}$. 
		
		Assume $l_p = m?$ for some $m\in M$. 
		By the fact that $\gamma$ is error-free,
		we know $\sigma_1^p = m$.
		By definition,
		there is a copy of $\sigma_i^p$ two time units after the occurrence of $\sigma^p_{i-1}$ for every $i\in\{2,\dots,n\}$,
		and there is a new hash symbol inserted two time units after the occurrence of $\sigma^p_{n}$. 		
		The addition of  new symbols is not required.  
		
		Assume $l_p = m!$ for some $m\in M$. 
		Recall that $n=\max(\gamma)$ is the maximum length of the channel content in $\gamma$. 
		Hence there must be some $j\in\{1,\dots,n\}$ such that $\sigma_j^p=\#$. 
		By definition,  
		the smallest $j\in\{1,\dots,n\}$ with $\sigma_j^p=\#$ is replaced by $m$ exactly two time units later. For each of the remaining symbols there is a copy two time units later. The addition of  new symbols is not required.
		
		Assume $l_p=\varepsilon$.  We can proceed as above, concluding that the addition of new symbols is not required. 		
		
		Hence,
		there exists some timed word $w\in L(\cm,n)$ whose prefix is of the form $u_1 \cdot u_2 \cdot\dots u_p\cdot u_{p+1}$, where  $u_{p+1}=(s_{p},2p+\delta)(\sigma_1^{p+1},2p+\delta+\delta_1)(\sigma_2^{p+1},2p+\delta+\delta_2)\dots(\sigma_n^{p+1},2p+\delta+\delta_n)(l_{p+1},2p+\delta+1)$
		for some $\sigma_1^{p+1},\dots,\sigma_n^{p+1}\in M\cup\{\#\}$. 
		
		We thus have proved that there exists some $w\in L(\cm, n)$ with $\max(w)=n$.	
	\end{proof}

	\begin{lemma}
		\label{lemma_word_to_comp}
		For each $n\in \N$ and $w\in L(\cm, n)$ with $\max(w)=n$,
		there exists some error-free computation $\gamma$ of $\cm$ from $(s_I,\varepsilon)$ to $(s_F,x)$ for some $x\in M^*$ with $\max(\gamma)\leq n$. 
	\end{lemma}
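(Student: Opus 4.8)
The plan is to establish the converse of Lemma~\ref{lemma_comp_to_word}: from a word $w\in L(\cm,n)$ with $\max(w)=n$ we read a computation $\gamma$ directly off the block structure of $w$, and then check it is error-free. Write $w$ in the normal form stated just before Lemma~\ref{lemma_function}, with control-state symbols $s_0=s_I,s_1,\dots$, the last of which is $s_F$, with label symbols $l_1,l_2,\dots$, and with blocks $w_1,w_2,\dots$ of respective lengths $n_1=n,n_2,\dots$. For each $i$ let $x_{i-1}\in M^*$ be the word obtained from the block $w_i$ by deleting every occurrence of $\#$; in particular $x_0=\varepsilon$, since $w_1=\#^n$ by the prefix condition in the definition of $L(\cm,n)$. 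Let $\gamma$ be the sequence of triples $\langle (s_{i-1},x_{i-1}),l_i,(s_i,x_i)\rangle$. Since each $x_i$ is a sub-word of a block, $\max(\gamma)\le\max_i n_i$, which we will see equals $n$; and $\gamma$ will be an error-free computation once we verify $\langle (s_{i-1},x_{i-1}),l_i,(s_i,x_i)\rangle\in\to$ for every $i$. Note that the clause ``provided that $(s,l,s')\in\Delta$'' in the definition of $L(\cm,n)$ already supplies the required transition of $\Delta$.

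The crucial point is that $\max(w)=n$ forces \emph{every} block to have exactly $n$ symbols. Indeed, the copy policy in the definition of $L(\cm,n)$ --- equivalently, the strictly increasing functions $f_i$ of Lemma~\ref{lemma_function} --- gives $n_i\le n_{i+1}$, and since $n_1=n=\max(w)=\max_i n_i$ we conclude $n_i=n$ for all $i$. This excludes precisely the two sub-cases of the definition that strictly enlarge a block: the ``$l=m!$ with no $\#$ in the block'' case (channel overflow) and the ``$l=m?$ with $\sigma_1\neq m$'' case (insertion error) would each yield $n_{i+1}=n_i+1$. Hence every step of $w$ falls into a block-size-preserving sub-case, and these are exactly the ones that match an error-free move of $\cm$.

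What remains is a short case analysis on $l_i$, in each case reading $x_{i-1}$ off $w_i$ and $x_i$ off $w_{i+1}$ via the copy policy. If $l_i=\varepsilon$, all symbols of $w_i$ are $\#$, so $x_{i-1}=\varepsilon$, and since every symbol has a copy two time units later $x_i=\varepsilon$, matching the $\to$-rule for $\varepsilon$. If $l_i=m!$, then $w_i$ contains a $\#$; by the requirement that hashes occur only after message symbols, its first $\#$ sits immediately after the last message of $x_{i-1}$, and replacing it by $m$ while copying the other symbols in order yields $x_i=x_{i-1}\cdot m$, matching the $\to$-rule for $m!$. If $l_i=m?$, then necessarily $\sigma^i_1=m$; the copies of $\sigma^i_2,\dots,\sigma^i_n$ appear in the same order, followed by a fresh $\#$, so the message sub-word satisfies $x_{i-1}=m\cdot x_i$, matching the $\to$-rule for $m?$. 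Therefore $\gamma$ is an error-free computation from $(s_I,\varepsilon)$ to $(s_F,x)$, where $x$ is read off the last block, and $\max(\gamma)\le n$.

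I expect the only genuinely delicate point to be this reduction to block-size-preserving moves --- that maximality of $\max(w)$, together with the monotonicity $n_i\le n_{i+1}$ coming from Lemma~\ref{lemma_function}, rules out both channel overflows and insertion errors. Everything after that is bookkeeping about how the message sub-word of a block transforms under the copy policy; the one step worth care is that ``hashes after messages'' makes replacing the first $\#$ equivalent to appending a letter at the channel tail.
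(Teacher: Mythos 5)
Your proof is correct and takes essentially the same route as the paper's: both rest on the observation that $\max(w)=n$ together with the monotonicity of the block sizes $n_1\leq n_2\leq\dots$ (coming from the copy policy, as in Lemma~\ref{lemma_function}) and $n_1=n$ forces every block to have exactly $n$ symbols, which excludes the overflow sub-case, the insertion-error sub-case, and spontaneously inserted symbols, so that the configuration sequence read off the blocks is an error-free computation with $\max(\gamma)\leq n$. Two small points to tighten: the excluded sub-cases yield $n_{i+1}\geq n_i+1$ (not necessarily $=n_i+1$, since further insertions are allowed), and you should state explicitly that equal block sizes also leave no room for spontaneous insertions inside the size-preserving sub-cases, which is what licenses reading $x_i$ off $w_{i+1}$ exactly via the copy policy.
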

	\begin{proof}
		Let $n\in \N$ and let $w\in L(\cm,n)$ such that $\max(w)=n$. 
		Hence the number of symbols in $M\cup\{\#\}$ between every control state symbol and the following label symbol (or the symbol $\star$ if the state symbol is $s_F$) in $w$ is constantly equal to $n$. 
		This implies that (1) 
		whenever a control state symbol $s$ is followed by a label symbol $m?$ one time unit later, 
		then the next symbol after $s$ must be $m$, which will be replaced by a new hash symbol; (2) whenever a state symbol $s$ is followed  by a label symbol $m!$ one time unit later, then there must exist some hash symbol in between, and the first such hash symbol will be replaced by $m$; and (3) $w$ does not contain any spontaneously inserted symbols. 
		From (1) and (3) we can conclude that $w$ encodes an error-free computation. 
		From (2) we can conclude that the choice of $n$ is big enough to capture the maximum length of the channel content. 
		Hence there exists some error-free computation of $\cm$ from $(s_I,\varepsilon)$ to $(s_F,x)$ for some $x\in M^*$ with $\max(\gamma)\leq n$. 
	\end{proof}
	
	\subsection{Excluding Faulty Computations}
	Next we define a parametric timed automaton $\A_\cm$ over $\Sigma_\cm$ such that $L(\cm,n)\cap L(\A_\cm)$ consists of all timed words that encode \emph{error-free} computations of $\cm$ from $(s_I,\varepsilon)$ to $(s_F,x)$ for some $x\in M^*$.
The parametric timed automaton $\A_\cm$ is shown in Fig. \ref{figure_A}.
It uses one clock $x$, parametrically constrained by a single parameter $p$. 
Note that $\A_\cm$ is deterministic. 
		\begin{figure}
\begin{center}
		\begin{picture}(100,18)(0,-18)
\node[NLangle=0.0,Nmarks=i,ilength=3,Nw=4.0,Nh=4.0,Nmr=2.0](n0)(5.0,-14.0){$1$}
\node[NLangle=0.0,Nw=4.0,Nh=4.0,Nmr=2.0](n1)(23.0,-14){$2$}
\node[NLangle=0.0,Nw=4.0,Nh=4.0,Nmr=2.0](n2)(50.0,-14){$3$}
\drawedge[curvedepth=4.0](n0,n1){\footnotesize{$s_I$}}
\put(9,-14){\footnotesize{$x:=0$}}
\drawloop[loopdiam=4](n1){}
\put(18,-4){\footnotesize{$\#,x=p$}}
\put(19,-7){\footnotesize{$x:=0$}}
\drawedge[curvedepth=4.0](n1,n2){\footnotesize{$L, x=p$}}
\drawloop[loopdiam=4](n2){\footnotesize{$\Sigma\backslash\{s_F\}$}}
\node[NLangle=0.0,Nw=4.0,Nh=4.0,Nmr=2.0](n3)(70.0,-14){$4$}
\node[NLangle=0.0,Nmarks=f,flength=3,Nw=4.0,Nh=4.0,Nmr=2.0](n4)(95,-14){$5$}
\drawedge[curvedepth=4.0](n2,n3){\footnotesize{$s_F$}}
\put(56,-14){\footnotesize{$x:=0$}}
\drawloop[loopdiam=4](n3){}
\put(65,-4){\footnotesize{$M,\#, x=p$}}
\put(66,-7){\footnotesize{$x:=0$}}
\drawedge[curvedepth=4.0](n3,n4){\footnotesize{$\star,x=p$}}
\end{picture}
\caption{The parametric timed automaton $\A_\cm$ that excludes insertion errors. }
\label{figure_A}
\end{center}
\end{figure}
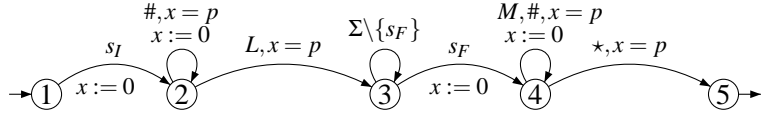
\begin{theorem}
	$\cm$ has an error-free computation from $(s_0,\varepsilon)$ to $(s_F,x)$ for some $x\in M^*$, if, and only if, 
	there exist $n\in \N$ and a parameter valuation $\pv$ such that $L(\cm,n)\cap L_\pv(\A_\cm)\neq \emptyset$.
\end{theorem}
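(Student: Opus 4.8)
We may assume $s_I\neq s_F$; the case $s_I=s_F$ is degenerate (the control state reachability instance is trivially a yes-instance while $L(\cm,n)=\emptyset$ for every $n$) and excluding it does not affect undecidability. The plan is to prove the two implications separately, in both directions using that $\A_\cm$ is engineered so that a successful run is possible only when the parameter $p$ equals $1/(n+1)$ and the fractional parts of all channel-content symbols lie on the grid $\{1/(n+1),2/(n+1),\ldots,n/(n+1)\}$.

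For the ``only if'' direction I would take an error-free computation $\gamma$ of $\cm$ from $(s_I,\varepsilon)$ to $(s_F,x)$, put $n:=\max(\gamma)$ and $\pv(p):=1/(n+1)$, fix an arbitrary $\delta\in\RP$, and apply Lemma~\ref{lemma_comp_to_word} with the specific delays $\delta_i:=i/(n+1)$; this produces a word $w\in L(\cm,n)$ with $\max(w)=n$ whose first block $w_1$ consists of $n$ hash symbols at fractional offsets $1/(n+1),\ldots,n/(n+1)$. The first genuine step is to propagate this from $w_1$ to every block: by Lemma~\ref{lemma_function} the block lengths satisfy $n=n_1\le n_2\le\cdots\le n_N$, and since $\max(w)=\max_i n_i=n$ all $n_i$ equal $n$, so every strictly increasing map $f_i\colon\{1,\ldots,n\}\to\{1,\ldots,n\}$ is the identity and $\delta^i_j=\delta^1_j=j/(n+1)$ for all $i,j$. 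With the timing of $w$ thus fully pinned down, I would read off a successful $\pv$-run of $\A_\cm$ on $w$: fire the edge $1\to2$ on $s_I$ (resetting the clock), take the $\#$-loop at location~$2$ once per symbol of $w_1$ (each guard $x=p$ holds because consecutive offsets differ by $1/(n+1)=p$ and the clock is reset each time), fire $2\to3$ on $l_1$ (the guard $x=p$ holds since $1-n/(n+1)=p$), then at location~$3$ consume via the unguarded $(\Sigma\setminus\{s_F\})$-loop every symbol strictly between $l_1$ and $s_F$ (all of them lie in $\Sigma\setminus\{s_F\}$ by the definition of $L(\cm,n)$), fire $3\to4$ on $s_F$ (resetting the clock), take the $(M\cup\{\#\})$-loop at location~$4$ once per symbol of the final block $w_N$, and finally fire $4\to5$ on $\star$ (again $x=p$ holds since $1-n/(n+1)=p$); the run ends in the final location~$5$, hence $w\in L(\cm,n)\cap L_\pv(\A_\cm)$.

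For the ``if'' direction I would take $w\in L(\cm,n)\cap L_\pv(\A_\cm)$ for some $n$ and $\pv$, fix a successful $\pv$-run, write $p:=\pv(p)$, and extract the same constraints in reverse. Since the untimed prefix of $w$ is $s_I\,\#^n\,l\,s$, the run is forced to use the $\#$-loop at location~$2$ exactly $n$ times and then fire $2\to3$ on $l$; the resets together with the guards $x=p$ then place $l$ exactly $(n+1)p$ time units after $s_I$, while $L(\cm,n)$ requires $l$ one time unit after $s_I$, so $(n+1)p=1$ and $p=1/(n+1)$. The symmetric argument at location~$4$ places $\star$ exactly $(n_N+1)p$ time units after $s_F$, whereas $L(\cm,n)$ requires $\star$ one time unit after $s_F$, so $(n_N+1)p=1=(n+1)p$ and hence $n_N=n$; combined with $n_1=n$ and the monotonicity $n_1\le\cdots\le n_N$ from Lemma~\ref{lemma_function} this gives $n_i=n$ for all $i$, so $\max(w)=n$, and Lemma~\ref{lemma_word_to_comp} then yields the required error-free computation of $\cm$ from $(s_I,\varepsilon)$ to $(s_F,x)$ for some $x\in M^*$.

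I expect the propagation step in the ``only if'' direction to be the main obstacle: Lemma~\ref{lemma_comp_to_word} pins down only the first block $w_1$ of $w$, but $\A_\cm$ also inspects the final block $w_N$ at location~$4$, so one must argue that the uniform fractional offsets $j/(n+1)$ survive through all intermediate blocks. This is precisely where combining $\max(w)=n$ with the monotonicity of block lengths (Lemma~\ref{lemma_function}) forces every $f_i$ to be the identity. Everything else is routine bookkeeping: checking that the chosen delays satisfy each clock guard of $\A_\cm$, and that the alphabet labelling each loop of $\A_\cm$ is exactly the set of symbols that can occur in the corresponding segment of a word of $L(\cm,n)$.
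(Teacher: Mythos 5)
Your proof is correct and follows essentially the same route as the paper: left-to-right via Lemma \ref{lemma_comp_to_word} with offsets $i/(n+1)$, combined with Lemma \ref{lemma_function} and $\max(w)=n$ to pin down the timing seen by $\A_\cm$ and then $\pv(p)=1/(n+1)$; right-to-left by reading off from the guards at locations $2$ and $4$ that $\pv(p)=1/(n+1)$ and $n_N=n$, using the monotonicity of block lengths from Lemma \ref{lemma_function} to get $\max(w)=n$, and concluding with Lemma \ref{lemma_word_to_comp}. The only differences are presentational (you pin down all intermediate blocks and spell out the run and the $s_I\neq s_F$ caveat, where the paper only argues about the suffix), so no further changes are needed.
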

\begin{proof}
	For the direction from left to right, let $\gamma\defeq\Pi_{1\leq i\leq k} \langle(s_{i-1},x_{i-1}),l_i,(s_i,x_i)\rangle$ 
	be an error-free computation of $\cm$ such that $s_0=s_I$, $x_0=\varepsilon$ and $s_k=s_F$. 
	Define $n = \max(\gamma)$.
	Let $\delta\in\RP$, 
	and define $\delta_i= {i\over{(n+1)}}$ for every $i\in\{1,\dots,n\}$. By Lemma \ref{lemma_comp_to_word}, 
	there exists $w\in L(\cm,n)$ such that the prefix of $w$ is of the form 
	$$(s_I,\delta)(\#,\delta+\delta_1)\dots(\#,\delta+\delta_n)(l_1,\delta+1)$$ 
	and $\max(w)=n$. 	
	This together with Lemma \ref{lemma_function} implies that the suffix of $w$ is of the form $$(s_F,2k+\delta)(\sigma_1,2k+\delta+\delta_1)\dots(\sigma_n,2k+\delta+\delta_n)(\star,2k+\delta+1)$$ for some $\sigma_1,\dots,\sigma_n\in M\cup\{\#\}$.
	Note that in both the prefix and the suffix of $w$ the time delay between every symbol is $\delta_1$. 
	Define $\pv(p)=\delta_1$.
	It is easy to see that $w\in L_\pv(\A_\cm)$.
	Hence $L(\cm,n)\cap L_\pv(\A_\cm)\neq \emptyset$.

	For the direction from right to left, 
	assume there exist $n\in\N$ and a parameter valuation $\pv$ such that $L(\cm,n)\cap L_\pv(\A_\cm)\neq \emptyset$.
	Let $w\in L(\cm,n)\cap L_\pv(\A_\cm)$.
	By definition of $L(\cm, n)$, 
	the prefix of $w$ is of the form
	$$(s_I,\delta)(\#,\delta+\delta_1)(\#,\delta+\delta_2)\dots (\#,\delta+\delta_n)(l,\delta+1)$$
	for some $\delta\in\RP$, $0<\delta_1<\delta_2<\dots<\delta_n<1$, and $l\in L$.
	The clock constraints at the loop in location $2$ and at the edge from location $2$ to $3$ implies $\delta_i = {i\over{(n+1)}}$ for every $i\in\{1,\dots,n\}$ and $\pv(p)=\delta_1$.
	By Lemma \ref{lemma_function},
	the suffix of $w$ must be of the form  $$(s_F,N+\delta)(\sigma_n,N+\delta+\delta'_1)\dots(\sigma_m,N+\delta+\delta'_m)(\star,N+\delta+1)$$
	for some $N\in\N$,  $0<\delta'_1<\delta'_2<\dots<\delta'_m<1$ such that
	$n\leq m$, and there exists a strictly increasing function $f:\{1,\dots,n\}\to\{1,\dots,m\}$ such that $\delta_i = \delta'_{f(i)}$. 
	Note that $\star$ occurs exactly one time unit after $s_F$. 
	This, together with the clock constraints  at the loop in location $4$ and at the edge from $4$ to the final location $5$, 
	implies $m=n$ (and $\delta'_i=\delta_i$ for every $i\in\{1,\dots,n\}$). 
	By Lemma \ref{lemma_function}, we further know that the number of symbols between a control state symbol and a symbol in $L\cup\{\star\}$ cannot \emph{decrease}, and hence it follows that $\max(w)=n$. 
	By Lemma \ref{lemma_word_to_comp}, 
	there exists an error-free computation of $\cm$ from $(s_0,\varepsilon)$ to $(s_F,x)$ for some $x\in M^*$. 	
\end{proof}
\subsection{The Reduction}
We define $L(\cm)=\cup_{n\in\N} L(\cm,n)$. 
Then we obtain
\begin{corollary}
	\label{col}
	There exists  an error-free computation of $\cm$ from $(s_I,\varepsilon)$ to $(s_F,x)$ for some $x\in M^*$, if, and only if, there exists some parameter valuation $\pv$ with $L_\pv(\A_\cm)\cap L(\cm) \neq \emptyset$. 
\end{corollary}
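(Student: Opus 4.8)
The plan is to obtain the Corollary as an immediate, purely set-theoretic consequence of the preceding Theorem together with the definition $L(\cm)=\bigcup_{n\in\N}L(\cm,n)$. First I would observe that, for any fixed parameter valuation $\pv$, intersection distributes over this union:
\[
L_\pv(\A_\cm)\cap L(\cm)=\bigcup_{n\in\N}\bigl(L_\pv(\A_\cm)\cap L(\cm,n)\bigr),
\]
so that $L_\pv(\A_\cm)\cap L(\cm)\neq\emptyset$ holds if, and only if, $L_\pv(\A_\cm)\cap L(\cm,n)\neq\emptyset$ for some $n\in\N$.

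Next I would chain the equivalences. The existence of a parameter valuation $\pv$ with $L_\pv(\A_\cm)\cap L(\cm)\neq\emptyset$ is, by the previous step, the same as the existence of $\pv$ and $n\in\N$ with $L(\cm,n)\cap L_\pv(\A_\cm)\neq\emptyset$; reordering the two existential quantifiers, this is exactly the right-hand side of the preceding Theorem. Invoking that Theorem (and recalling $s_0=s_I$) then gives the equivalence with the existence of an error-free computation of $\cm$ from $(s_I,\varepsilon)$ to $(s_F,x)$ for some $x\in M^*$, which is the claim.

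I do not expect any genuine obstacle: the Corollary is pure bookkeeping, its content being only that the non-deterministically chosen length bound $n$ can be absorbed into the timed language $L(\cm)$ instead of being carried around explicitly. The sole points that deserve to be written out carefully are the distribution of $\cap$ over the infinite union $\bigcup_{n\in\N}L(\cm,n)$ and the harmless interchange of the two existential quantifiers over $\pv$ and $n$; beyond the Theorem itself, no further facts about $L(\cm,n)$, $\A_\cm$, or parameter valuations are required.
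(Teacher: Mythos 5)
Your proposal is correct and matches the paper's intent: the paper states the Corollary as an immediate consequence of the preceding Theorem and the definition $L(\cm)=\bigcup_{n\in\N}L(\cm,n)$, giving no further argument, and your spelled-out bookkeeping (distributing $\cap$ over the union and swapping the existential quantifiers over $\pv$ and $n$) is exactly that implicit reasoning.
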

Next, we define the MTL formula $\varphi_\cm$ such that $L(\varphi_\cm)=L(\cm)$. 
The formula $\varphi_\cm$ is the conjunction of a set of formulas, each of them expressing one of the conditions of $L(\cm)$.  
We start by  defining some auxiliary formulas:
$\bigvee S \defeq \bigvee_{s\in S} s$,
 	$\bigvee M \defeq \bigvee_{m\in M} m$,
	$\bigvee L \defeq \bigvee_{l\in L} l$,
	$\varphi_{\mathsf{copyM}} \defeq \glob_{(0,1)}\bigwedge_{m\in M} (m\rightarrow \finally_{=2} m)$, and
	$\varphi_{\mathsf{copy}\#} \defeq\glob_{(0,1)}( \# \rightarrow \finally_{=2} \#)$.
\begin{itemize}
\item $\glob(\X_{>0}\true \vee \neg\X\true)$ (Strict monotonicity) 
	
\item $\glob\langle\bigwedge_{s\in S\backslash\{s_F\}} (s\rightarrow\bigvee_{(s,l,s)\in\Delta} (\finally_{=1}l \wedge \finally_{=2} s')) \wedge (s_F\rightarrow \finally_{=1}\star)\rangle$, \newline $\glob\langle \bigvee S \rightarrow ((\glob_{<2}\neg\bigvee S) \wedge (\glob_{(0,1)\cup(1,2)}\neg\bigvee L)\rangle$ (Conditions on the occurrence of control state symbols and symbols in $L\cup\{\star\}$)
	
\item $\glob\langle
	\bigvee S \rightarrow (\glob_{(0,1)} (\bigvee M \vee \#) \wedge \glob_{[1,2)} \neg(\bigvee M \vee \#))\rangle$, $\glob((\#\wedge\X \bigvee M)\rightarrow \false)$ (Conditions on symbols in $M\cup\{\#\}$)

\item $s_I\wedge \bigvee_{(s_I,l,s)\in\Delta} (\#\U  (l \wedge \X s))$ (Encoding of the initial configuration)
	
\item $\finally s_F$ (Reaching $s_F$)
	
\item $\glob\bigwedge_{(s,\varepsilon,-)\in\Delta\atop s\neq s_F} ((s\wedge \finally_{=1} \varepsilon)\rightarrow ((\glob_{(0,1)}\neg\bigvee M )\wedge \varphi_{\mathsf{copy}\#}))$
	
\item $\glob \bigwedge_{\delta=(s,m!,-)\in\Delta\atop s\neq s_F} ( (s\wedge\finally_{=1}m!)\rightarrow (\varphi_{\mathsf{copyM}} \wedge \varphi_{next\#}\wedge\varphi_{yes\#}  \wedge \varphi_{no\#} ))$, where
		\begin{itemize}
		\item $\varphi_{next\#} = \X\# \rightarrow (\X\finally_{=2} m \wedge \X\varphi_{\mathsf{copy}\#})$
		\item $\varphi_{yes\#} = (\finally_{<1}\wedge\neg\X\#) \# \rightarrow \glob_{<1}((\neg\#\wedge \X\# )\rightarrow \X\finally_{=2} m \wedge \X\varphi_{\mathsf{copy}\#}))$
		\item $\varphi_{no\#} = \neg\finally_{<1}\# \rightarrow  \glob_{<1} (\X m! \rightarrow \finally_{=2}(\X m\wedge\X\X \bigvee L))$
		\end{itemize}

	\item $\glob \bigwedge_{(s,m?,-)\in\Delta\atop s\neq s_F} ((s \wedge\finally_{=1} m?) \rightarrow
	(\varphi_{\mathsf{yesm}} \wedge \varphi_{\mathsf{nom}}))$, where 
		\begin{itemize}
		\item $\varphi_{\mathsf{yesm}} =  \X m \rightarrow (\varphi_{\mathsf{shift}} \U m?)$, 
		 $\varphi_{\mathsf{shift}}= \bigwedge_{m\in M}( \X m \rightarrow \finally_{=2}m) \wedge ( \X\#\rightarrow \finally_{=2}\#) \wedge (\X m? \rightarrow \finally_{=2}\#)$
	 \item $\varphi_{\mathsf{nom}} = \X\neg m\rightarrow (\varphi_{\mathsf{copyM}} \wedge \varphi_{\mathsf{copy}\#} \wedge \glob_{<1}(\X m? \rightarrow \finally_{=2}(\X\#\wedge \X\X\bigvee L)))$
		\end{itemize}
\end{itemize}

	\paragraph{\bf Proof of Theorem \ref{thm_main}}
	Let $\cm=(S,s_0,M,\Delta)$ be a channel machine, let $s_F\in S$.
	Define the parametric timed automaton $\A_\cm$ and the MTL formula $\varphi_\cm$ as above. 
	By Corollary \ref{col} we know that there is an error-free computation from $(s_I,\varepsilon)$ to $(s_F,x)$ for some $x\in M^*$, if, and only if, there exists some parameter valuation $\pv$ with  $L_\pv(\A_\cm)\cap L(\varphi_\cm) \neq \emptyset$. 
	The latter, however, is equivalent to $L_\pv(\A_\cm)\not\subseteq L(\neg\varphi_\cm)$, \ie, there exists some timed word $w\in L_\pv(\A_\cm)$ such that $w\not\models\neg\varphi_\cm$. 
	Hence, the MTL-model checking problem for parametric timed automata is undecidable. \qed

	\section{Discussion}
	For our undecidability result we construct a parametric timed automaton using a parametric \emph{equality} constraint of the form $x=p$. 
	Parametric equality constraints	seem to be a source of undecidability; they occur in the undecidability proofs of, \eg, the emptiness problem for parametric timed automata with three clocks~\cite{DBLP:conf/stoc/AlurHV93}, and the satisfiability problem for a parametric extension of LTL~\cite{DBLP:journals/tocl/AlurETP01}. A natural question is thus to consider the MTL-model checking problem for L/U-automata~\cite{DBLP:journals/jlp/HuneRSV02}, a subclass of parametric timed automata in which parameters are only allowed to occur either as a lower bound or as an upper bound, but not both, and for which the emptiness problem is decidable independent of the number of clocks.
	We further remark that the proof does not work if we restrict the parameter valuation to be a function mapping each parameter to a non-negative integer.

	\paragraph{\emph{Acknowledgements}} I would like to thank James Worrell for pointing me to MTL's capability to encode computations of \emph{Turing machines with insertion errors}, explained in~\cite{DBLP:conf/fossacs/OuaknineW06}. 

\nocite{*}
\bibliographystyle{eptcs}

\end{document}